\documentclass[11pt]{article}
\usepackage[a4paper]{geometry}
\usepackage{amsfonts, amsmath, amssymb, amsthm, graphicx, caption, authblk, multirow, makecell, framed, float, xcolor, enumitem, tikz, hyperref}
\usepackage{algorithm,algorithmicx,algpseudocode,mathrsfs}
\usetikzlibrary{fit,shapes.arrows}
\theoremstyle{definition}
\newtheorem{theorem}{Theorem}

\theoremstyle{remark}

\newcommand{\sh}[1]{\textsc{#1}}
\newcommand{\id}{\text{id}}

\newcommand*{\mybox}[1]{%
  \framebox{\raisebox{0cm}[0.5\baselineskip][0.05\baselineskip]{%
    \hbox to 0.10cm {\hss#1\hss}}}\hspace{0.05cm}}

\begin{document}
\title{A Complexity Hierarchy of Shuffles in Card-Based Protocols}
\author[1]{Tomoki Ono\thanks{\texttt{onotom@uec.ac.jp}}}
\author[2]{Suthee Ruangwises\thanks{\texttt{suthee@cp.eng.chula.ac.th}}}
\affil[1]{The University of Electro-Communications, Tokyo, Japan}
\affil[2]{Department of Computer Engineering, Faculty of Engineering, Chulalongkorn University, Bangkok, Thailand}
\date{}
\maketitle

\begin{abstract}
Card-based cryptography uses physical playing cards to construct protocols for secure multi-party computation. Existing card-based protocols employ various types of shuffles, some of which are easy to implement in practice while others are considerably more complex. In this paper, we classify shuffle operations into several levels according to their implementation complexity. We motivate this hierarchy from both practical and theoretical perspectives, and prove separation results between several levels by showing that certain shuffles cannot be realized using only operations from lower levels. Finally, we propose a new complexity measure for evaluating card-based protocols based on this hierarchy.

\textbf{Keywords:} card-based cryptography, secure multi-party computation, shuffles, complexity, permutation sets
\end{abstract}

\section{Introduction}
Card-based cryptography is a branch of cryptography that uses physical playing cards to construct cryptographic protocols. Instead of relying on computers or electronic devices, the computation is carried out through simple physical operations on cards, allowing participants to perform cryptographic tasks while keeping their private information hidden. Applications of card-based cryptography include secure multi-party computation, zero-knowledge proofs, and other cryptographic primitives. The study of card-based cryptography dates back to 1990, when den Boer~\cite{5card} introduced the ``five-card trick'', a protocol that allows two parties to compute the logical AND of their input bits using five cards.

A key component of card-based protocols is the use of \emph{shuffles}. Shuffles are randomized operations applied to a deck of cards in order to hide information about the positions of the cards. They play a crucial role in ensuring the security of card-based protocols, as they prevent players from learning sensitive information during the execution of the protocol. Over the years, a variety of shuffles have been proposed and used in the literature.

However, not all shuffle operations are equally practical. Some shuffles are easy to perform in practice and can be implemented naturally by human players, while others are more difficult to realize physically. As a result, most existing card-based protocols rely on only a small number of fundamental shuffle primitives that are both practically implementable and widely used in protocol design.

In particular, the vast majority of known protocols use shuffles that can be expressed using four fundamental operations: the \emph{scramble shuffle} (\sh{ss}), the \emph{random cut} (\sh{rc}), the \emph{pile-scramble shuffle} (\sh{pss}), and the \emph{random pile cut} (\sh{rpc}). These operations are widely studied in the literature and form the basic toolbox for constructing card-based protocols.

\subsection{Related Work}
Card-based cryptography was first introduced by den Boer~\cite{5card} in 1990 through the well-known \textit{five-card trick}, which uses the \sh{rc} operation. In 1998, Niemi and Renvall~\cite{niemi} proposed a copy protocol that employs the \sh{ss}. In 2009, Mizuki and Sone~\cite{mizuki09} introduced the \textit{random bisection cut}, which can be viewed as a special case of both \sh{rpc} and \sh{pss} with two piles.

Subsequent works introduced additional shuffle techniques and theoretical results. In 2015, Koch et al.~\cite{nonclosed} introduced the notions of non-closed shuffles, whose set of permutations does not form a group, and non-uniform shuffles, which assign different probabilities to different permutations. Such shuffles are generally difficult to implement in practice. In the same year, Ishikawa et al.~\cite{scramble} introduced the \sh{pss}. Shinagawa et al.~\cite{polygon} introduced the \sh{rpc} in 2017 in the context of rotating polygon cards. Nishimura et al.~\cite{unequal} proposed special tools for performing shuffles with piles of unequal sizes, enabling certain non-closed shuffles to be realized. In a subsequent work, Nishimura et al.~\cite{nonuniform} further showed that the same tools can also realize certain non-uniform shuffles.

In 2020, Koch and Walzer~\cite{chosen} proved that every uniform closed shuffle can be realized using only \sh{rc} operations, although additional cards of the same size but with a different back color are required. More recently, Miyamoto and Shinagawa~\cite{graph} showed that certain shuffles corresponding to automorphism groups of graphs can be realized using only \sh{pss} operations, again requiring additional cards. Saito et al.~\cite{saito} proved that any shuffle with rational probabilities can be realized using only \sh{rc} operations with additional cards; however, the number of required cards can grow extremely large, especially when the outcome set contains more than two permutations.

On the experimental side, Miyahara et al.~\cite{time} experimentally measured the time required to perform some fundamental operations in practice and compared the running time of existing card-based protocols.

\subsection{Our Contribution}
In this paper, we introduce a hierarchy of shuffle operations used in card-based protocols based on their implementation complexity. We first analyze shuffle operations from both practical and theoretical viewpoints, identifying the fundamental operations that can be reliably performed in real-world settings.

Based on these observations, we classify shuffles into several levels according to the operations required to realize them. We prove separation results between these levels by showing that certain sets of permutations cannot be realized using only operations from lower levels. In particular, we establish that \sh{rc} is strictly more complex than \sh{ss}, and that \sh{rpc} is strictly more complex than \sh{rc}. We also show that some shuffles cannot be realized using the four fundamental operations commonly used in existing card-based protocols.

Finally, we perform an exhaustive search for small deck sizes to enumerate the permutation sets realizable at each level, revealing strong structural restrictions on achievable shuffles. Based on this hierarchy, we propose a new complexity measure for evaluating card-based protocols.

\section{Preliminaries}
In the formal computational model of card-based protocols proposed by Mizuki and Shizuya~\cite{formal}, a shuffle of a deck is mathematically defined by a pair $(\Pi, \mathscr{F})$, where $\Pi$ is a set of permutations and $\mathscr{F}$ is a probability distribution over $\Pi$. A~\emph{deterministic permutation} is a special case of a shuffle in which $|\Pi|=1$, i.e. $\Pi=\{\pi\}$ for some permutation $\pi$, and $\mathscr{F}$ assigns probability $1$ to $\pi$. Informally, this means the cards are rearranged according to a fixed rule without randomness.

We use the standard cycle notation for permutations. For example, $(1\,2)$ denotes the transposition that swaps positions $1$ and $2$, while $(1\,2\,3)$ denotes the cyclic permutation mapping $1\mapsto2$, $2\mapsto3$, and $3\mapsto1$. A permutation is called \emph{even} if it can be written as a composition of an even number of transpositions, and \emph{odd} otherwise. For example, $(1\,2)$ is odd, while $(1\,2\,3)$ is even.

A shuffle is called \emph{uniform} if $\mathscr{F}$ is the uniform distribution over $\Pi$, and is called \emph{closed} if $\Pi$ forms a subgroup of the symmetric group. Informally, a uniform shuffle selects every allowed permutation with equal probability, while a closed shuffle has the property that composing any two allowed permutations again produces an allowed permutation. In this paper, we mainly consider uniform shuffles. Unless stated otherwise, a shuffle denoted only by a set $\Pi$ refers to the shuffle $(\Pi, \mathscr{F})$, where $\mathscr{F}$ is the uniform distribution over $\Pi$.

Another property sometimes considered in the literature is the \emph{cyclic property}~\cite{abe}, which informally means that the set of allowed permutations consists of all powers of a single permutation. However, this property is irrelevant to the notion of \emph{shuffle complexity} studied in this paper. For example, the cyclic shuffle $\{\id, (1\,2)(3\,4)\}$ is more complex than the non-cyclic shuffle $\{\id, (1\,2), (3\,4), (1\,2)(3\,4)\}$ according to the hierarchy introduced later in this paper.

Next, we will formally define the four fundamental shuffle operations.

\subsection{Scramble Shuffle} \label{ssdef}
The scramble shuffle (\sh{ss}) rearranges a deck of $n$ cards into a uniformly random permutation, denoted by $S_n$, the symmetric group on $n$ elements. This operation is also sometimes called a \emph{perfect shuffle}. In practice, this corresponds to scrambling the cards randomly on the table. While this is a natural operation commonly performed, the first card-based protocol to formally employ this operation is the copy protocol of Niemi and Renvall~\cite{niemi}.

\subsection{Random Cut} \label{rcdef}
The random cut (\sh{rc}) rearranges a deck of $n$ cards into a uniformly random cyclic shift, denoted by $\{\id, \sigma, \sigma^2, \ldots, \sigma^{n-1}\}$, where $\sigma = (1\,2\,\ldots\,n)$. Here, $\sigma^i$ corresponds to shifting the deck cyclically by $i$ positions. In practice, this operation can be implemented by repeatedly performing a cut of the deck with randomly chosen cut positions~\cite{hindu}. This operation appears in the very first card-based protocol, the five-card trick of den Boer~\cite{5card}.

\subsection{Pile-Scramble Shuffle}
The pile-scramble shuffle (\sh{pss}) divides a deck of $kn$ cards into $k$ piles of $n$ consecutive cards, and then rearranges the $k$ piles according to a uniformly random permutation. Formally, we index the cards from $1$ to $kn$ according to their positions in the deck. The \sh{pss} is denoted by
\[
\{\pi_\tau \mid \tau \in S_k\},
\]
where $\pi_\tau((i-1)n+j) = (\tau(i)-1)n + j$ for $1 \le i \le k$ and $1 \le j \le n$.

In practice, this operation can be implemented by placing the $n$ cards of each pile into an envelope and scrambling the pile of $k$ envelopes. Alternatively, one may use a rubber band or paper clip to tie the cards in each pile together before scrambling them. This operation was first introduced by Ishikawa et al.~\cite{scramble}.

\subsection{Random Pile Cut}
The random pile cut (\sh{rpc}), also known as a \emph{pile-shifting shuffle}~\cite{makaro} or a \emph{random $k$-section cut}~\cite{equality}, divides a deck of $kn$ cards into $k$ piles of $n$ consecutive cards, and then rearranges the $k$ piles according to a uniformly random cyclic shift. Formally, it is denoted by
\[
\{\id, \sigma^n, \sigma^{2n}, \ldots, \sigma^{(k-1)n}\},
\]
where $\sigma = (1\,2\,\ldots\,kn)$.

Like the \sh{pss}, the \sh{rpc} can be implemented in practice by placing the $n$ cards of each pile into an envelope and cutting the pile of $k$ envelopes. Alternatively, one may use a rubber band or paper clip to tie the cards in each pile together before performing the cut. This operation was first introduced by Shinagawa et al.~\cite{polygon}.

In 2020, Koch and Walzer~\cite{chosen} showed that the \sh{rpc} can be implemented without envelopes, rubber bands, or paper clips; however, their method requires $k$ additional cards of the same size but with a different back color.

\section{Practical Viewpoint}
In this section, we discuss practical considerations that motivate our hierarchy of shuffle operations. The notion of complexity considered here is qualitative rather than quantitative: intuitively, a shuffle is considered more complex if it requires more sophisticated physical actions or additional tools to perform securely. A formal comparison criterion based on realizability will be introduced later in the theoretical viewpoint.

When designing card-based protocols, it is important to consider whether the shuffle operations used in the protocol can be performed reliably in practice. In particular, we aim to identify commonly used shuffle operations that are implementable using simple physical actions and verifiable by all players. Such operations should satisfy the following two properties.

\paragraph{1. The resulting permutation is unknown to everyone.}
After an operation is performed, no player should know the exact permutation applied to the deck, including the person performing the operation. This can be achieved through simple physical actions on the deck.

One common method is to scramble the cards on the table. In this process, the cards are mixed randomly by repeatedly moving them around on the table. In practice, the resulting distribution over permutations is not exactly uniform after finitely many scrambling actions. However, the process can be modeled as a Markov chain whose stationary distribution is uniform over all permutations. Thus, sufficiently long scrambling approximates the idealized \sh{ss} operation defined in Section~\ref{ssdef}.

Another natural action is to grab a portion of the deck and move it to another position. If the insertion position is chosen arbitrarily, repeated applications of this operation also converge to a distribution close to that produced by the \sh{ss}. However, if the operation is restricted so that cards are moved only in cyclic order (the bottom portion of the deck is moved to the top, or vice versa), the resulting permutation becomes a cyclic shift of the deck. In practice, a single cut does not necessarily produce a uniform distribution over cyclic shifts. However, if the deck is cut several times with independently chosen cut positions, the resulting process can be viewed as a Markov chain whose stationary distribution is uniform over all cyclic shifts. Thus, sufficiently many cuts approximate the idealized \sh{rc} operation defined in Section~\ref{rcdef}.

Thus, among simple physical actions that naturally hide the resulting permutation from all players, the two fundamental operations are the \sh{ss} and the \sh{rc}.

\paragraph{2. All players can verify that no cheating occurs.}
In addition to hiding the permutation, it must also be possible for all players to verify that the shuffler performs only permutations from the allowed set $\Pi$. If the shuffler were able to secretly apply permutations outside $\Pi$, the security of the protocol could be compromised.

From this perspective, the \sh{ss} is particularly easy to verify. Since the allowed set $\Pi$ consists of all permutations, every possible outcome of the scrambling process is valid. Therefore, players only need to observe that the cards are being mixed; there is no need to verify the exact permutation applied.

In contrast, the \sh{rc} is slightly more difficult to verify. In this case, the allowed set $\Pi$ consists only of cyclic shifts. Players must therefore ensure that the shuffler performs only a cyclic cut and does not secretly apply any other permutation. This requires closer observation during the operation.

\paragraph{Conclusion.}
Combining these two considerations, we conclude that the \sh{ss} is the easiest fundamental operation to perform and verify in practice, while the \sh{rc} is the second easiest. In contrast, the \sh{pss} and the \sh{rpc} typically require extra tools such as envelopes, rubber bands, paper clips, or cards of the same size but with a different back color, and therefore cannot be performed as easily using only the cards from a given deck.

\section{Theoretical Viewpoint}
We say that a shuffle $A$ is \emph{at least as complex as} a shuffle $B$, denoted by $A \succeq B$, if shuffle $B$ can be realized by applying shuffle $A$ several times, possibly interleaved with deterministic permutations, without using any additional cards or tools. We say that $A$ is \emph{strictly more complex than} $B$, denoted by $A \succ B$, if $A \succeq B$ but $B \not\succeq A$.

For example, the \sh{ss} can be realized using the \sh{rc}. Consider the following sequence of shuffles:
\[
\{\id, \sigma_2\},\ 
\{\id, \sigma_3, \sigma_3^2\},\ 
\ldots,\ 
\{\id, \sigma_n, \sigma_n^2, \ldots, \sigma_n^{n-1}\},
\]
where $\sigma_i = (1\,2\,\ldots\,i)$ for $2 \le i \le n$. Applying these shuffles sequentially produces a uniformly random permutation in $S_n$. Therefore, $\sh{rc} \succeq \sh{ss}$.

Similarly, the \sh{pss} can be realized using the \sh{rpc}. Since the \sh{rpc} performs a uniformly random cyclic shift on the $k$ piles, applying it $k-1$ times generates a uniformly random permutation of the $k$ piles. Hence, $\sh{rpc} \succeq \sh{pss}$.\footnote{The \sh{rpc} can also be realized using the \sh{pss} if a large number of additional cards are used~\cite{graph}.}

Moreover, since the \sh{ss} is a special case of the \sh{pss}, and the \sh{rc} is a special case of the \sh{rpc}, it follows that $\sh{pss} \succeq \sh{ss}$, and $\sh{rpc} \succeq \sh{rc}$.\footnote{The \sh{rpc} can also be realized using the \sh{rc} if additional cards of the same size but with a different back color are used~\cite{chosen}.}

The relationships between the four shuffle operations discussed above can be summarized as follows:
\[
\sh{rpc} \succeq \sh{pss} \succeq \sh{ss},
\qquad
\sh{rpc} \succeq \sh{rc} \succeq \sh{ss}.
\]

Next, we prove that $\sh{rc} \succ \sh{ss}$ by exhibiting a set of permutations realizable using the \sh{rc} but not realizable using only \sh{ss} operations and deterministic permutations.

\begin{theorem} \label{thm1}
For $n=3$, the shuffle $\{\id,(1\,2\,3),(1\,2\,3)^2\}$ cannot be realized using only \sh{ss} operations and deterministic permutations.
\end{theorem}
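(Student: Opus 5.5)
The plan is to track the probability distribution on $S_3$ produced by any sequence of allowed operations, and to show that it can never be the uniform distribution on $A_3=\{\id,(1\,2\,3),(1\,2\,3)^2\}$. I will interpret ``\sh{ss} operations'' on a deck of $3$ cards as scramble shuffles applied to a set of positions of size $2$ or $3$; a scramble on a single position is trivial. Up to conjugation by deterministic permutations, these are the shuffles $\{\id,t\}$ for a transposition $t$, together with the full $S_3$.

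A protocol is then a word $\rho_0\,X_1\,\rho_1\,X_2\cdots X_m\,\rho_m$, where the $\rho_i$ are deterministic permutations and each $X_i$ is an independent uniform shuffle of one of these forms. The resulting permutation is distributed as the convolution $D=\delta_{\rho_0}*U_1*\delta_{\rho_1}*\cdots*U_m*\delta_{\rho_m}$, where $U_i$ is the uniform distribution on the set of $X_i$ (with the composition convention fixed once; the argument is symmetric).

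I will split into three cases.

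\textbf{No shuffle.} If $m=0$, then $D$ is a point mass. Its support has size $1\neq 3$, so it is not the target.

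\textbf{Some full scramble.} Suppose some $X_i$ is a full scramble on all three positions. Convolving any distribution with the uniform distribution on the group $S_3$ yields the uniform distribution on $S_3$, and further convolutions keep it uniform. So $D$ is uniform on $S_3$, whose support has $6$ elements, not $3$.

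\textbf{Only two-card scrambles, $m\ge1$.} This is the main case. Write $D=\mu*U_m*\delta_{\rho_m}$, where $U_m$ is uniform on $\{\id,t\}$ for a transposition $t$. Then $D(x)=D(x\,\rho_m^{-1}t\rho_m)$ for every $x$; that is, $D$ is invariant under right multiplication by the transposition $t'=\rho_m^{-1}t\rho_m$. Right multiplication by $t'$ flips parity. Hence the support of $D$ contains an odd permutation whenever it contains an even one, and it is nonempty. But the target support $A_3$ consists only of even permutations, so it cannot occur.

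The only step needing care is this invariance identity, specifically getting the order of composition right so that the conjugated transposition sits on the correct side. Either side works, since left or right multiplication by a transposition both change parity. An alternative closing argument is that all probabilities arising in this case are dyadic rationals, so $1/3$ is impossible, but the parity argument is cleaner.
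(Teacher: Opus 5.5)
Your proof is correct, and your full-scramble case is the same as the paper's. The genuine difference is in the two-card-only case.

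\textbf{The paper's route.} It uses exactly the alternative you set aside. After $r$ two-card scrambles every probability has the form $m/2^r$, so the value $1/3$ is unreachable. Taking $r=0$, this also covers your separate no-shuffle case.

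\textbf{Your route.} You use that the last two-card scramble, conjugated by $\rho_m$, makes $D$ invariant under right multiplication by a transposition. Hence any nonempty support contains an odd permutation.

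\textbf{What each buys.} The dyadic argument does not care which permutations make up the target, and it excludes every uniform target whose size is not a power of two. However, it depends on uniformity. Your parity argument ignores the probabilities altogether and proves more: no distribution supported on $A_3$, uniform or not, can be reached.

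\textbf{A simplification available to you.} Your invariance step applies verbatim when the last shuffle is a full scramble, since $S_3$ also contains transpositions. So apart from $m=0$, your three cases could be collapsed into one.

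\textbf{Relation to the later theorems.} This is the parity idea the paper reserves for Theorems~\ref{thm2} and~\ref{thm3}. Here it needs no argument that parity persists through later operations, because every available scramble contains a transposition, so the last one alone suffices.
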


\begin{proof}
Suppose the only allowed randomized operations are the $2$-card \sh{ss} and the $3$-card \sh{ss}, together with deterministic permutations.

\begin{itemize}
    \item If a $3$-card \sh{ss} is used at least once, the distribution becomes uniform on $S_3$. Since the uniform distribution on $S_3$ is invariant under further deterministic permutations and \sh{ss} operations, the final distribution must remain uniform on all six permutations. Thus, the target set $\{\id,(1\,2\,3),(1\,2\,3)^2\}$ cannot be obtained.
    \item If no $3$-card \sh{ss} is used, then only $2$-card \sh{ss} operations occur. After $r$ such operations, each outcome has probability of the form $m/2^r$. Thus, a uniform distribution assigning probability $1/3$ to three outcomes cannot be obtained.
\end{itemize}

Hence, the shuffle $\{\id,(1\,2\,3),(1\,2\,3)^2\}$ cannot be realized.
\end{proof}

Therefore, $\sh{rc} \succ \sh{ss}$, which coincides with our conclusion from the practical viewpoint that the \sh{ss} is the easiest fundamental operation.

\subsection{Levels of Shuffles}
We now formally classify shuffles (with uniform distributions over sets of permutations) into levels according to the operations required to realize them.

\begin{itemize}
    \item Level~0 shuffles are those obtainable using only deterministic permutations. Therefore, they consist only of single-element sets.
    \item Level~1 shuffles are those obtainable using \sh{ss} operations and deterministic permutations.
    \item Level~2 shuffles are those obtainable using \sh{rc} operations and deterministic permutations.
\end{itemize}

For example, when $n=3$, the shuffle $\{\id,(1\,2\,3),(1\,2\,3)^2\}$ belongs to Level~2 since it can be realized using \sh{rc} operations but cannot be realized using only \sh{ss} operations.

Next, we prove that $\sh{rpc} \succ \sh{rc}$ by exhibiting a set of permutations realizable using the \sh{rpc} but not realizable using only \sh{rc} operations and deterministic permutations.

\begin{theorem}\label{thm2}
For $n=4$, the shuffle $\{\id,(1\,2)(3\,4)\}$ cannot be realized using only \sh{rc} operations and deterministic permutations.
\end{theorem}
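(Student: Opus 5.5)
The plan is to look at the \emph{last} randomized operation and use the fact that the support of the final distribution must contain a whole coset of the cyclic group that operation generates. I interpret the model as in the proof of Theorem~\ref{thm1}: an \sh{rc} may be applied to any $k$ of the $4$ cards ($2 \le k \le 4$), and deterministic permutations may be interleaved freely. Such an operation applies a uniformly random element of a cyclic group $C=\langle c\rangle$ of order $k$, where $c$ is a $k$-cycle on the chosen positions, conjugated by the relabelling of positions.

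First I dispose of the trivial case. If no \sh{rc} is used, the result is a single permutation, so a two-element set cannot be obtained.

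Otherwise, write the whole procedure as: an arbitrary procedure producing a distribution $D'$, then the last \sh{rc} with group $C$, then a deterministic permutation $\pi$ (which absorbs all later deterministic steps). Fix any $x$ in the support of $D'$, which has positive probability. Every element of $C$ is chosen with positive probability, so every permutation in the set $\pi C x$ occurs with positive probability in the final distribution. Hence we must have
\[
\pi C x \subseteq T := \{\id,(1\,2)(3\,4)\}.
\]

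Since $|\pi C x| = |C| = k$, this forces $k \le 2$, so $k=2$ and $C=\{\id,(a\,b)\}$ for some transposition $(a\,b)$. Then both $\pi x$ and $\pi (a\,b) x$ lie in $T$, and they are distinct. Their ratio is
\[
\pi(a\,b)x\,(\pi x)^{-1} = \pi (a\,b)\pi^{-1},
\]
which is a transposition. On the other hand, the ratio of two distinct elements of $T$ is $(1\,2)(3\,4)$, a product of two disjoint transpositions and not a transposition. This is the contradiction. The same conclusion can be seen by parity: both elements of $T$ are even, while the two elements of $\pi C x$ have opposite parities.

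The only delicate step is justifying the coset containment, namely that the final support really contains all of $\pi C x$. This uses the independence and uniformity of the last cut, which give positive probability to each element of $C$, together with the fact that probabilities in the final distribution add and cannot cancel. This holds in the standard model of Mizuki and Shizuya~\cite{formal}. Once the containment is in place, the rest is the group-theoretic check above. Notably, the argument never needs to track exact probabilities, unlike the case analysis in Theorem~\ref{thm1}.
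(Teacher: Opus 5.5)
Your proof is correct, but it takes a different route from the paper's.

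\textbf{The paper's route.} It tracks two invariants through the whole procedure. The first is parity. Once a $2$-card or $4$-card \sh{rc} has been applied, both even and odd permutations stay in the support forever, because $3$-card cuts are even and a deterministic permutation changes the parity of every outcome at once. The second is the denominators of the probabilities. If only $3$-card cuts are used, every probability has the form $m/3^r$, which can never equal $1/2$.

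\textbf{Your route.} You use a single observation about the last step. The final support contains a full translate $\pi C x$ of the last cut's group, so $|C| \le 2$. This rules out $3$-card and $4$-card cuts by counting alone. The remaining transposition case then fails because $\pi(a\,b)\pi^{-1}$ is odd while $(1\,2)(3\,4)$ is even.

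\textbf{What your route buys.} It never uses exact probabilities or uniformity. So it proves a stronger statement: no distribution of any kind whose support is exactly $\{\id,(1\,2)(3\,4)\}$ is reachable. The same template also handles Theorems~\ref{thm1} and~\ref{thm3}: in each, the last operation either has order larger than the target set, or is a transposition whose ratio has the wrong parity relative to the target.

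\textbf{What the paper's route buys.} Its invariants hold at every intermediate stage, not only at the end. The denominator argument is also the kind of tool needed when the obstruction concerns probabilities rather than supports. A support-only argument like yours can never detect such an obstruction, for example when separating a uniform shuffle from a non-uniform one on the same set of permutations.
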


\begin{proof}
Suppose the only allowed randomized operations are the $2$-card \sh{rc}, the $3$-card \sh{rc}, and the $4$-card \sh{rc}, together with deterministic permutations.

\begin{itemize}
    \item If a $2$-card or $4$-card \sh{rc} is used at least once, both even and odd permutations appear in the distribution (since a transposition is odd, and a $4$-cycle alternates parity among its powers). Since further deterministic permutations and \sh{rc} operations preserve the property that both parities appear, once both even and odd permutations appear they cannot be eliminated. However, the target set $\{\id,(1\,2)(3\,4)\}$ contains only even permutations, so it cannot be obtained.
    
    \item If no $2$-card or $4$-card \sh{rc} is used, then only $3$-card \sh{rc} operations occur. After $r$ such operations, each outcome has probability of the form $m/3^r$. Thus, a uniform distribution assigning probability $1/2$ to two outcomes cannot be obtained.
\end{itemize}

Hence, the shuffle $\{\id,(1\,2)(3\,4)\}$ cannot be realized.
\end{proof}

Indeed, the shuffle $\{\id,(1\,2)(3\,4)\}$ can be realized using the \sh{rpc} and deterministic permutations. For example, apply the deterministic permutation $(2\,3)$, followed by the \sh{rpc} with two piles of two cards, and then apply the deterministic permutation $(2\,3)$ again.

Therefore, $\sh{rpc} \succ \sh{rc}$, which also coincides with our conclusion from the practical viewpoint. We then define \emph{Level~3} shuffles to be those obtainable using \sh{rpc} operations and deterministic permutations. For example, when $n=4$, the shuffle $\{\id,(1\,2)(3\,4)\}$ belongs to Level~3.

\subsection{Shuffles Beyond the Four Fundamental Operations}
Some shuffles cannot be realized using the four fundamental operations. One example is the shuffle $\{\id,(1\,2\,3)\}$.

At first glance, one might attempt to implement this shuffle using an operation similar to the \sh{rpc} (or \sh{pss}), by dividing the deck into two piles of unequal size: one pile containing two cards and the other containing one card, and then randomly swapping the two piles. However, such an operation is not secure in practice. If rubber bands or paper clips are used to hold the piles together, players can distinguish the piles during the shuffle because they contain different numbers of cards. If envelopes are used instead, the players can still determine which pile is which after the cards are taken out of the envelopes. Therefore, this approach does not achieve a secure shuffle.

In fact, it is theoretically impossible to realize the shuffle $\{\id,(1\,2\,3)\}$ using only \sh{rpc} operations and deterministic permutations, as shown in Theorem~\ref{thm3}.

\begin{theorem} \label{thm3}
For $n=3$, the shuffle $\{\id,(1\,2\,3)\}$ cannot be realized using only \sh{rpc} operations and deterministic permutations.
\end{theorem}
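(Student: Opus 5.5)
For $n=3$, every \sh{rpc} on three cards has one of the following forms: the trivial case with $k=1$ pile of size $3$; the $2$-card \sh{rc} (two piles of one card, applied to a $2$-card subdeck while the third card is fixed); or the $3$-card \sh{rc} (three piles of one card). A pile size of $n=3$ with $k=1$ gives the identity, and $kn=2$ with $n=2$ is again trivial. So, as in the proofs of Theorem~\ref{thm1} and Theorem~\ref{thm2}, the only nontrivial randomized operations available are the $2$-card \sh{rc} $\{\id,(i\,j)\}$ on some pair of positions and the $3$-card \sh{rc}, interleaved with deterministic permutations. This is the first step of the plan, and it reduces the statement to the same setting used for Theorem~\ref{thm1}: the $2$-card \sh{rc} coincides with the $2$-card \sh{ss}.

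I would then split into cases on which operations appear, mirroring the earlier theorems.

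\emph{Case 1: at least one $2$-card \sh{rc} is used.} Every $2$-card operation applies a transposition with probability $1/2$. I would track the parity of the resulting permutation. Suppose the current distribution gives probability $p$ to even permutations. Deterministic permutations and $3$-cycles either preserve $p$ or swap it with $1-p$. A $2$-card \sh{rc} maps $p$ to $\tfrac12 p+\tfrac12(1-p)=1/2$. So after the last $2$-card \sh{rc}, the even and odd parts each have probability exactly $1/2$, and they stay at $1/2$ forever after. But the target $\{\id,(1\,2\,3)\}$ consists only of even permutations, so its even part has probability $1$, not $1/2$. This is a contradiction.

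\emph{Case 2: only $3$-card \sh{rc} operations are used.} After $r$ such operations, every outcome probability has the form $m/3^r$. This can never equal $1/2$, so the uniform distribution on two outcomes cannot arise.

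The main obstacle is the first step: justifying cleanly that no other \sh{rpc} configuration exists on three cards. In particular, a \sh{rpc} may act on any subset of positions of the deck, possibly after a deterministic rearrangement, so I need to argue that these cases are absorbed into the deterministic permutations. The case analysis itself is easy. I would also note that a stronger invariant is available: since every operation has probabilities that are powers of $1/2$ or $1/3$, I could instead track the coset structure. However, the parity argument above is sufficient and is the one I would write.
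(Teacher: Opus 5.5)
Your proof is correct and is essentially the paper's argument: the same reduction of \sh{rpc} on three cards to the $2$-card and $3$-card \sh{rc}, followed by the same case split, using parity in one case and the denominators $3^r$ in the other. The only cosmetic difference is that in the first case you track the probability mass on even permutations (pinned at exactly $1/2$), whereas the paper tracks the support (both parities appear and can never be eliminated); either invariant suffices.
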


\begin{proof}
Since a nontrivial \sh{rpc} operation requires at least two piles, each containing at least two cards, it can be applied to decks with at least four cards. Therefore, no nontrivial \sh{rpc} operation is available when $n=3$, and the model has essentially the same power as using only \sh{rc} operations and deterministic permutations.

Suppose the only allowed randomized operations are the $2$-card \sh{rc} and the $3$-card \sh{rc}, together with deterministic permutations.

\begin{itemize}
    \item If a $2$-card \sh{rc} is used at least once, both even and odd permutations appear in the distribution. Since further deterministic permutations and \sh{rc} operations preserve the property that both parities appear, once both even and odd permutations appear they cannot be eliminated. However, the target set $\{\id,(1\,2\,3)\}$ contains only even permutations, so it cannot be obtained.
    
    \item If no $2$-card \sh{rc} is used, then only $3$-card \sh{rc} operations occur. After $r$ such operations, each outcome has probability of the form $m/3^r$. Thus, a uniform distribution assigning probability $1/2$ to two outcomes cannot be obtained.
\end{itemize}

Hence, the shuffle $\{\id,(1\,2\,3)\}$ cannot be realized.
\end{proof}

Nishimura et al.~\cite{unequal} proposed a tool to help implement such shuffles, namely \sh{rpc} operations with piles of unequal sizes. We call this operation the \emph{unequal random pile cut} (\sh{urpc}). Their method uses boxes with sliding covers (see Fig.~\ref{boxes}). The cards of each pile are placed into a separate box, and the \sh{rc} is applied to the boxes. The boxes are then stacked on top of each other, and the covers between them are removed so that the cards fall into a single pile.

A simpler alternative, also proposed by~\cite{unequal}, is to use one envelope for each pile of cards together with one large envelope (see Fig.~\ref{envelopes}). After applying the \sh{rc} to the envelopes, all of them are placed inside the large envelope. The envelopes are then pulled out simultaneously, allowing the cards to fall into a single pile inside the large envelope.

\begin{figure}[H]
\centering
\begin{minipage}{0.45\textwidth}
    \centering
    \includegraphics[width=50mm]{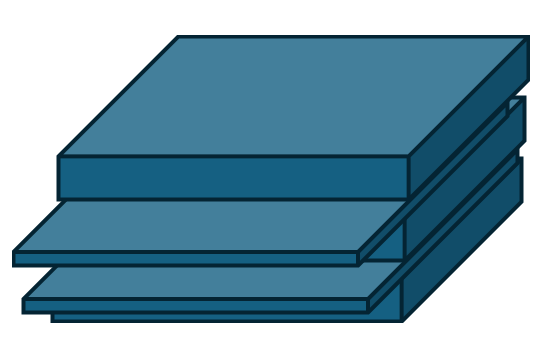}
    \caption{Boxes with sliding covers}
    \label{boxes}
\end{minipage}
\hfill
\begin{minipage}{0.45\textwidth}
    \centering
    \includegraphics[width=50mm]{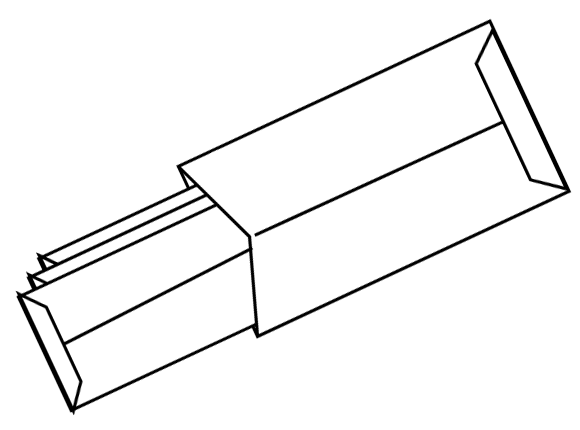}
    \caption{Small envelopes inside a large envelope}
    \label{envelopes}
\end{minipage}
\end{figure}

Theoretically, the \sh{urpc} allows us to perform shuffles of the form
\[
\{\id, \sigma^{a_1}, \sigma^{a_2}, \ldots, \sigma^{a_k}\},
\]
where $\sigma = (1\,2\,\ldots\,n)$, $k \le n-1$, and $a_1,a_2,\ldots,a_k$ are distinct elements of $\{1,2,\ldots,n-1\}$.

We define \emph{Level~4} shuffles to be those realizable using \sh{urpc} operations and deterministic permutations. For example, when $n=3$, the shuffle $\{\id,(1\,2\,3)\}$ belongs to Level~4.\footnote{More generally, any uniform shuffle can be realized using \sh{rc} operations with a sufficiently large number of additional cards, although the number of required cards may grow extremely large~\cite{saito}.}

The hierarchy of shuffle levels introduced above is summarized in Table~\ref{shufflelevels}. The hierarchy is cumulative: every Level~$i$ shuffle is also a Level~$i+1$ shuffle.

\begin{table}[H]
    \centering
    \caption{Hierarchy of shuffle levels}
    \label{shufflelevels}
    \begin{tabular}{|c|c|}
        \hline
        Level & Realizable using \\ \hline
        Level 0 & Deterministic permutations \\ \hline
        Level 1 & \sh{ss} operations and deterministic permutations \\ \hline
        Level 2 & \sh{rc} operations and deterministic permutations \\ \hline
        Level 3 & \sh{rpc} operations and deterministic permutations \\ \hline
        Level 4 & \sh{urpc} operations and deterministic permutations \\ \hline
    \end{tabular}
\end{table}

\subsection{Number of Realizable Shuffles}
For a deck of $n$ cards, there are $n!$ possible permutations, and therefore $2^{n!}-1$ possible nonempty subsets of permutations. By performing an exhaustive search over realizable subsets of permutations, we compute the number of realizable subsets at each level, as summarized in Table~\ref{table1}. As the table shows, only a small fraction of these subsets are realizable using the fundamental shuffle operations.

\begin{table}[H]
    \centering
	\caption{Number of realizable subsets using only each level of shuffles} \label{table1}
	\begin{tabular}{|c|c|c|c|c|c|c|}
		\hline
		$n$ & Level 0 & Level 1 & Level 2 & Level 3 & Level 4 & \#Nonempty subsets \\ \hline
		1 & 1 & 1 & 1 & 1 & 1 & 1 \\ \hline
		2 & 2 & 3 & 3 & 3 & 3 & 3 \\ \hline
		3 & 6 & 25 & 27 & 27 & 33 & 63 \\ \hline
		4 & 24 & 564 & 1,231 & 2,307 & 18,249 & 16,777,215 \\ \hline
	\end{tabular}
\end{table}

For $n=3$, Level~1 shuffles already generate $25$ subsets out of the $63$ possible nonempty subsets of $S_3$. The two additional subsets that become realizable at Level~2 but not at Level~1 are the two cyclic subgroups of size $3$: $\{\id,(1\,2\,3),(1\,3\,2)\}$ and $\{(1\,2),(1\,3),(2\,3)\}$.

Level~3 does not introduce any new subsets when $n\le3$. This is because a nontrivial \sh{rpc} operation requires at least two disjoint cycles, each of length at least two, which in turn requires at least four cards. Hence for $n\le3$, Level~3 is equivalent to Level~2.

For $n=4$, the number of realizable subsets grows significantly but still remains extremely small compared to the total number $2^{24}-1=16{,}777{,}215$ of possible nonempty subsets. This illustrates that the structural constraints imposed by the shuffle operations severely limit the sets of permutations that can be obtained.

\section{Complexity Measurement of Protocol}
Using the hierarchy of shuffle operations introduced in this paper, we propose a way to measure the shuffle complexity of card-based protocols. The goal is to evaluate protocols based on the types of shuffle operations they require.

Given a protocol, we represent its shuffle complexity by a tuple
\[
(a_1,a_2,a_3,a_4,a_5),
\]
where each component counts the number of fundamental operations of a particular type used in the protocol. Specifically:
\begin{itemize}
\item $a_1$ is the number of \sh{ss} operations,
\item $a_2$ is the number of \sh{rc} operations,
\item $a_3$ is the number of \sh{pss} or \sh{rpc} operations,
\item $a_4$ is the number of \sh{urpc} operations,
\item $a_5$ counts any other shuffle operations not covered by previous categories.
\end{itemize}

This representation is motivated by the observation that the practical difficulty of implementing shuffle operations can vary significantly depending on the type of shuffle and the method used to perform it. For example, the time required to perform an \sh{rpc} operation may differ greatly depending on whether envelopes, rubber bands, or additional cards are used. Therefore, representing complexity by counting the types of shuffle operations provides a more flexible way to compare protocols than assigning fixed time costs.

Previous experimental work by Miyahara et al.~\cite{time} attempted to measure the running time of card-based protocols by experimentally recording the exact time required to perform them in practice. Their approach focused on basic actions such as \texttt{add} (adding cards), \texttt{turn} (turning cards), \texttt{perm} (deterministic permutation), and \texttt{shuffle} (any shuffle). However, they did not explicitly distinguish between different types of shuffle operations, although they briefly noted that under some implementations, performing the random bisection cut can take more than four times as long as performing the \sh{rc}.

Our approach differs in that we explicitly distinguish different classes of shuffle operations according to the hierarchy introduced in this paper. Instead of measuring the exact execution time in seconds, we represent the shuffle complexity of a protocol as the tuple $(a_1,a_2,a_3,a_4,a_5)$, which records the number of operations of each type. This representation allows protocols to be compared in terms of the structural complexity of the shuffle operations they require.

\subsection{Shuffle Complexity of Selected Protocols}
Table~\ref{table2} summarizes the shuffle complexity of several well-known card-based protocols using the tuple representation introduced above. The table illustrates how different protocols rely on different types of shuffle operations. For example, the five-card trick~\cite{5card} and the equality function protocol of Shinagawa and Mizuki~\cite{6card} use only one \sh{rc}, while the Sudoku ZKP protocols of Sasaki et al.~\cite{sudoku2} rely heavily on \sh{ss} and \sh{pss} operations. Note that we only consider protocols with guaranteed finite running time, and the complexity shown corresponds to the worst-case analysis.

\begin{table}[H]
    \centering
	\caption{Shuffle complexity of selected card-based protocols} \label{table2}
	\begin{tabular}{|c|c|c|}
		\hline
		Protocol & Reference & Shuffle complexity \\ \hline
        Five-card trick & den Boer~\cite{5card}, 1990 & $(0,1,0,0,0)$ \\ \hline
        Six-card AND & Mizuki-Sone~\cite{mizuki09}, 2009 & $(0,0,1,0,0)$ \\ \hline
        Four-card XOR & Mizuki-Sone~\cite{mizuki09}, 2009 & $(0,0,1,0,0)$ \\ \hline
        Four-card AND & Mizuki et al.~\cite{mizuki12}, 2012 & $(1,0,1,0,0)$ \\ \hline
        Eight-card majority & Nishida et al.~\cite{majority}, 2013 & $(0,0,2,0,0)$ \\ \hline
        Five-card AND & Koch et al.~\cite[\S5]{nonclosed}, 2015 & $(3,0,1,2,0)$ \\ \hline
        Six-card equality & Shinagawa-Mizuki~\cite{6card}, 2018 & $(0,1,0,0,0)$ \\ \hline
        Six-card majority & Toyoda et al.~\cite{majority2}, 2021 & $(0,1,1,0,0)$ \\ \hline
        $2n$-card equality & Ruangwises-Itoh~\cite{equality}, 2021 & $(0,0,n-1,0,0)$ \\ \hline
        Sudoku ZKP (Protocol A) & Sasaki et al.~\cite[\S3.4]{sudoku2}, 2020 & $(n,0,4n,0,0)$ \\ \hline
        Sudoku ZKP (Protocol B) & Sasaki et al.~\cite[\S3.5]{sudoku2}, 2020 & $(2n,0,2n,0,0)$ \\ \hline
        Sudoku ZKP (Protocol C) & Sasaki et al.~\cite[\S4]{sudoku2}, 2020 & $(3n,0,1,0,0)$ \\ \hline
	\end{tabular}
\end{table}

The complexity tuples introduced in this section are intended primarily as descriptive summaries rather than as a total ordering of protocols. In general, two protocols with different tuples may not be directly comparable. Nevertheless, the hierarchy introduced in this paper suggests that higher-level shuffle operations should generally be regarded as more costly or difficult to implement than lower-level ones. Thus, in many practical situations, reducing the use of higher-level operations may be preferable even if it increases the number of lower-level operations.

\section{Conclusion and Future Work}
In this paper, we introduced a hierarchy of shuffle operations in card-based cryptography and studied their relationships from both practical and theoretical viewpoints. In particular, we showed that certain shuffles cannot be realized using lower-level operations, establishing strict separations within the hierarchy. We also proposed a tuple-based representation for describing the shuffle complexity of card-based protocols.

Possible future work includes finding explicit criteria for determining whether a given set of permutations is realizable within a particular level of the hierarchy. Another interesting direction is to derive formulas for the number of realizable subsets at each level for a given number of cards.

\subsubsection*{Acknowledgement}
The first author was supported by JSPS KAKENHI Grant Number JP25KJ1294. The second author was supported by the 111th Anniversary Engineering Research Catalyst Fund Towards U Top 100.


\begin{thebibliography}{}
	\bibitem{abe} Y. Abe, Y. Hayashi, T. Mizuki and H. Sone. Five-Card AND Computations in Committed Format Using Only Uniform Cyclic Shuffles. \textit{New Generation Computing}, 39(1): 97--114 (2021).
	\bibitem{makaro} X. Bultel, J. Dreier, J.-G. Dumas, P. Lafourcade, D. Miyahara, T. Mizuki, A. Nagao, T. Sasaki, K. Shinagawa and H. Sone. Physical Zero-Knowledge Proof for Makaro. In \textit{Proceedings of the 20th International Symposium on Stabilization, Safety, and Security of Distributed Systems (SSS)}, pp. 111--125 (2018).
	\bibitem{5card} B. den Boer. More Efficient Match-Making and Satisfiability: the Five Card Trick. In \textit{Proceedings of the Workshop on the Theory and Application of of Cryptographic Techniques (EUROCRYPT '89)}, pp. 208--217 (1990).
	\bibitem{sudoku} R. Gradwohl, M. Naor, B. Pinkas and G.N. Rothblum. Cryptographic and Physical Zero-Knowledge Proof Systems for Solutions of Sudoku Puzzles. \textit{Theory of Computing Systems}, 44(2): 245--268 (2009).
	\bibitem{scramble} R. Ishikawa, E. Chida and T. Mizuki. Efficient Card-Based Protocols for Generating a Hidden Random Permutation Without Fixed Points. In \textit{Proceedings of the 14th International Conference on Unconventional Computation and Natural Computation (UCNC)}, pp. 215--226 (2015).
	\bibitem{chosen} A. Koch and S. Walzer. Foundations for Actively Secure Card-Based Cryptography. In \textit{Proceedings of the 10th International Conference on Fun with Algorithms (FUN)}, pp. 17:1--17:23 (2020).
	\bibitem{nonclosed} A. Koch, S. Walzer and K. H\"{a}rtel. Card-Based Crypto-graphic Protocols Using a Minimal Number of Cards. In \textit{Proceedings of the 21st International Conference on the Theory and Application of Cryptology and Information Security (ASIACRYPT)}, pp. 783--807 (2015).
	\bibitem{time} D. Miyahara, I. Ueda, Y. Hayashi, T. Mizuki and H. Sone. Evaluating card-based protocols in terms of execution time. \textit{International Journal of Information Security}, 20(5): 729--740 (2021).
	\bibitem{graph} K. Miyamoto and K. Shinagawa. Graph Automorphism Shuffles from Pile-Scramble Shuffles. New Generation Computing, 40(1): 199--223 (2022).
	\bibitem{mizuki12} T. Mizuki, M. Kumamoto and H. Sone. The Five-Card Trick Can Be Done with Four Cards. In \textit{Proceedings of the 18th International Conference on the Theory and Application of Cryptology and Information Security (ASIACRYPT)}, pp. 598--606 (2012).
	\bibitem{formal} T. Mizuki and H. Shizuya. A formalization of card-based cryptographic protocols via abstract machine. \textit{International Journal of Information Security}, 13(1): 15--23 (2014).
	\bibitem{mizuki09} T. Mizuki and H. Sone. Six-Card Secure AND and Four-Card Secure XOR. In \textit{Proceedings of the 3rd International Frontiers of Algorithmics Workshop (FAW)}, pp. 358--369 (2009).
	\bibitem{niemi} V. Niemi and A. Renvall. Secure multiparty computations without computers. \textit{Theoretical Computer Science}, 191(1--2): 173--183 (1998).
	\bibitem{majority} T. Nishida, T. Mizuki and H. Sone. Securely Computing the Three-Input Majority Function with Eight Cards. In \textit{Proceedings of the 2nd International Conference on the Theory and Practice of Natural Computing (TPNC)}, pp. 193--204 (2013).
	\bibitem{nonuniform} A. Nishimura, Y. Hayashi, T. Mizuki and H. Sone. Pile-Shifting Scramble for Card-Based Protocols. \textit{IEICE Trans. Fundamentals}, E101.A(9): 1494--1502 (2018).
	\bibitem{unequal} A. Nishimura, T. Nishida, Y. Hayashi, T. Mizuki and H. Sone. Card-based protocols using unequal division shuffles. \textit{Soft Computing}, 22(2): 361--371 (2018).
	\bibitem{equality} S. Ruangwises and T. Itoh. Securely Computing the $n$-Variable Equality Function with $2n$ Cards. \textit{Theoretical Computer Science}, 887: 99--110 (2021).
	\bibitem{saito} T. Saito, D. Miyahara, Y. Abe, T. Mizuki and H. Shizuya. How to Implement a Non-uniform or Non-closed Shuffle. In \textit{Proceedings of the 9th International Conference on the Theory and Practice of Natural Computing (TPNC)}, pp. 107--118 (2020).
	\bibitem{sudoku2} T. Sasaki, D. Miyahara, T. Mizuki and H. Sone. Efficient card-based zero-knowledge proof for Sudoku. \textit{Theoretical Computer Science}, 839: 135--142 (2020).
	\bibitem{6card} K. Shinagawa and T. Mizuki. The Six-Card Trick: Secure Computation of Three-Input Equality. In \textit{Proceedings of the 21st Annual International Conference on Information Security and Cryptology (ICISC)}, pp. 123--131 (2018).
	\bibitem{polygon} K. Shinagawa, T. Mizuki, J.C.N. Schuldt, K. Nuida, N. Kanayama, T. Nishide, G. Hanaoka and E. Okamoto. Card-Based Protocols Using Regular Polygon Cards. \textit{IEICE Trans. Fundamentals}, E100.A(9): 1900--1909 (2017).
	\bibitem{majority2} K. Toyoda, D. Miyahara and T. Mizuki. Another Use of the Five-Card Trick: Card-Minimal Secure Three-Input Majority Function Evaluation. In \textit{Proceedings of the 22nd International Conference on Cryptology in India (INDOCRYPT)}, pp. 536--555 (2021).
	\bibitem{hindu} I. Ueda, D. Miyahara, A. Nishimura, Y. Hayashi, T. Mizuki and H. Sone. Secure implementations of a random bisection cut. \textit{International Journal of Information Security}, 19(4): 445--452 (2020).
\end{thebibliography}
\end{document}